\newtheorem{theorem}{Theorem}
\newtheorem{remark}{Remark}
\newtheorem{proposition}{Proposition}
\newtheorem{lemma}{Lemma}
\newtheorem{corollary}{Corollary}
\newsavebox\myboxA
\newsavebox\myboxB
\newlength\mylenA
\newcommand*\xoverline[2][0.75]{%
    \sbox{\myboxA}{$\m@th#2$}%
    \setbox\myboxB\null
    \ht\myboxB=\ht\myboxA%
    \dp\myboxB=\dp\myboxA%
    \wd\myboxB=#1\wd\myboxA
    \sbox\myboxB{$\m@th\overline{\copy\myboxB}$}
    \setlength\mylenA{\the\wd\myboxA}
    \addtolength\mylenA{-\the\wd\myboxB}%
    \ifdim\wd\myboxB<\wd\myboxA%
       \rlap{\hskip 0.5\mylenA\usebox\myboxB}{\usebox\myboxA}%
    \else
        \hskip -0.5\mylenA\rlap{\usebox\myboxA}{\hskip 0.5\mylenA\usebox\myboxB}%
    \fi}
\newcommand{\R}{\mathbb{R}}
\newtheorem{definition}{Definition}
\let\NAT@parse\undefined
\title{\LARGE \bf A Semi-Definite Programming Approach to Robust Adaptive MPC under State Dependent Uncertainty}
\author{Monimoy Bujarbaruah$^\star$, Siddharth H. Nair$^\star$, and Francesco Borrelli  
\thanks{The authors are with the MPC Lab, UC Berkeley, USA; E-mails: \tt\scriptsize{\{monimoyb,siddharth\_nair,fborrelli\}@berkeley.edu.}}
\thanks{$^\star$ These authors contributed equally to this work.
}
}
\begin{document}

\maketitle
  \thispagestyle{empty}
\pagestyle{empty}
\begin{abstract}
We propose an Adaptive MPC framework for uncertain linear systems to achieve robust satisfaction of state and input constraints. The uncertainty in the system is assumed additive, state dependent, and globally Lipschitz with a known Lipschitz constant. We use a non-parametric technique for online identification of the system uncertainty by approximating its \emph{graph} via envelopes defined by quadratic constraints. At any given time, by solving a set of \emph{convex} optimization problems, the MPC controller
guarantees robust constraint satisfaction for the closed-loop system for all possible values of system uncertainty modeled by the envelope. The uncertainty envelope is refined with data using Set Membership Methods. We highlight the efficacy of the proposed framework via a detailed numerical example. 
\end{abstract}

\section{Introduction}
System modeling and identification has been an integral part of statistics and data sciences \cite{friedman2001elements, hastie2017generalized}. In recent times, as data-driven decision making and control becomes ubiquitous \cite{recht2018tour, rosolia2018data}, such system identification methods are being integrated with control algorithms for control of uncertain dynamical systems. In computer science community, data driven reinforcement learning algorithms \cite{bertsekas1989adaptive, watkins1992q} have been extensively utilized for policy and value function learning of uncertain systems. In control theory, if the actual model of a system is unknown, adaptive control \cite{krstic1995nonlinear, sastry2011adaptive} strategies have been applied for simultaneous system identification and control. In such classical adaptive control methods, primarily unconstrained systems are considered, and model parameters are learned from data in terms of point estimates, while proving stability of the closed-loop system. 

The concept of online model learning and adaptation has been extended to control design for systems under constraints as well. 
In \cite{dean2018safely}, linear time invariant system dynamics matrices and the confidence intervals are learned using Ordinary Least Squares regression and imposed constraints are robustly satisfied using System Level Synthesis \cite{doyle2017system}. Lowering the conservatism of such an approach, the field of Adaptive MPC has gained attention in recent times \cite{tanaskovic2014adaptive, bujarbaruahAdapFIR, lorenzenAutomaticaAMPC, Khler2019LinearRA, bujarArxivAdap}. In the aforementioned Adaptive MPC frameworks, Set Membership Method based approaches are used to obtain the sets containing all possible realizations of model uncertainty. These sets are then modified as more data becomes available. However, the model uncertainty learned is not considered as a function of system states. Work such as \cite{hewing2017cautious, soloperto2018learning, koller2018learning} extend the Adaptive MPC framework to systems with state dependent uncertainties, where set based bounds of the uncertainty is adapted using Gaussian Process (GP) regression. However, due to probabilistic nature of GP regression based estimates, there is no closed-loop constraint satisfaction guarantees in such methods. 
To the best of our knowledge, there has not been a unifying framework for robust MPC design in presence of state dependent system model uncertainty. 

In this paper, we propose a novel approach to designing a robust Adaptive MPC algorithm for linear systems subject to state and input constraints, in presence of state dependent system uncertainty. The uncertainty is assumed globally Lipschitz, with a known Lipschitz constant. We utilize a non-parametric recursive system identification strategy \cite{nair_LarX}, which identifies the \emph{graph} of the uncertainty from data using its Lipschitz property. The identification is successively refined with recorded data. Our main  contributions are: 
\begin{itemize}
    \item We provide set based bounds containing all possible realizations of the system uncertainty, using its Lipschitz property. This in contrast to the probabilistic nature of bounds in \cite{hewing2017cautious, soloperto2018learning, koller2018learning}, due to the use of GP regression. Our uncertainty set bounds are modified successively with set intersections upon gathering new data. 
    
    \item Utilizing the above bounds on system uncertainty, we synthesize a robust Adaptive MPC controller by solving convex optimization problems, satisfying imposed state and input constraints. We prove its recursive feasibility, extending feasibility guarantees of \cite{tanaskovic2014adaptive, lorenzenAutomaticaAMPC, bujarArxivAdap} in presence of state dependent uncertainty.  We further demonstrate the validity and efficacy of the proposed approach through a detailed numerical simulation.  
\end{itemize}
The paper is organized as follows: Section~\ref{sec:prob_for} formulates the robust optimization problem to be solved for the uncertain system, along with the system model and constraints. Section~\ref{sec:unc_adap} contains the recursive system uncertainty adaptation framework. The tractable robust Adaptive MPC is posed in Section~\ref{sec:data_dmpc}. We present numerical results in Section~\ref{sec:numerics}. 

\section{Problem Formulation}\label{sec:prob_for}
\subsection{System Model}
The system is given by:
\begin{align}\label{eq:unc_sys}
x_{t+1} = A x_t + B u_t + d(x_t), 
\end{align}
where $x_t\in \mathbb{R}^{n}$ is the state at time $t$, $u_t\in\mathbb{R}^{m}$ is the input, $A$ and $B$ are known system matrices of appropriate dimensions, and $d(x_t)$ constitutes un-modelled dynamics, that is, the system uncertainty, which is $L_d$ Lipschitz in its convex and closed domain $\mathrm{dom}(d)$ with a known $L_d$. 

\subsection{Constraints}
The system dynamics are subject to polytopic state and input constraints of the form: 
\begin{subequations}\label{eq:constraints}
\begin{align}
\mathcal{X}&=\{x\in\mathbb{R}^n\ \vert H_x x \leq h_x \}, \label{eq:state_con}\\
\mathcal{U}&=\{u\in\mathbb{R}^m\ \vert H_uu\leq h_u \} \label{eq:input_con},
\end{align}
\end{subequations}
where we assume $\mathcal{X} \subseteq \mathrm{dom}(d)$. 

\subsection{Robust Optimization Problem}
Our goal is to design a controller that solves the following infinite horizon optimal control problem with constraints \eqref{eq:constraints}
\begin{equation}\label{eq:OP_inf}
	\begin{array}{clll}
		\hspace{0cm} 
	\displaystyle\min_{u_0,u_1(\cdot),\ldots} & \displaystyle\sum\limits_{t\geq0} \bar{x}_t^\top Q \bar{x}_t + u_t^\top(\bar{x}_t) R u_t(\bar{x}_t)  \\[1ex]
		\text{s.t.}  & x_{t+1} = Ax_t + Bu_t(x_t) + d(x_t),\\
		& \bar{x}_{t+1} = A \bar{x}_t + Bu_t(\bar{x}_t) + \bar{d}(\bar{x}_t),\\[1ex]
		& H_x x_{t+1} \leq h_x,\ ~\forall d(x_{t}) \in \mathcal{D}(x_{t}),\ \\
		& H_u u(x_{t}) \leq h_u,\ \forall d(x_{t}) \in \mathcal{D}(x_{t}),\ \\[1ex]
		&  x_0 = \bar{x}_0 = x_S,\\ 
		& t=0,1,\ldots,
	\end{array}
\end{equation}
where $\mathcal{D}(x_t)$ is a state dependent compact set where the uncertainty $d(x_t)$ is guaranteed to lie, and $\bar{d}(\bar{x}_t)$ denotes the certainty equivalent (nominal) estimate of uncertainty at any point $\bar{x}_t$ along the nominal trajectory.  Matrices $Q,R \succ 0$ are weight matrices. We point out that, as system \eqref{eq:unc_sys} is uncertain, the optimal control problem \eqref{eq:OP_inf} consists of finding input policies $[u_0,u_1(\cdot),u_2(\cdot),\ldots]$, where $u_t: \mathbb{R}^{n}\ni x_t \mapsto u_t = u_t(x_t)\in \mathbb{R}^m$. We wish to approximate solutions to optimization problem \eqref{eq:OP_inf} by solving the following finite time constrained optimal control problem at each time $t$, in a receding horizon fashion:
\begin{equation} \label{eq:mpc_prob_tough}
    \begin{array}{llll}
        \displaystyle \min_{u_{t|t}, u_{t+1|t}(\cdot),\dots} & \mathlarger{\sum}_{k=t}^{t+N-1} (\bar{x}_{k|t}^\top Q \bar{x}_{k|t} + u_{k|t}^\top(\bar{x}_{k|t}) Ru_{k|t}(\bar{x}_{k|t})) 
        \\ & ~~~~~~~~~~~~~~~~~~~~~~~~~~+ \bar{x}_{t+N|t}^\top P_N \bar{x}_{t+N|t}  
        \vspace{2mm} \\ 
        \ \ \quad  \text{s.t. }  & 
         {x}_{k+1|t} = A{x}_{k|t} + Bu_{k|t}(x_{k|t}) + d(x_{k|t}),\\
         & \bar{x}_{k+1|t} = A\bar{x}_{k|t} + Bu_{k|t}(\bar{x}_{k|t}) + \bar{d}(\bar{x}_{k|t}),\vspace{.6em}\\ 
        & H_x x_{k+1|t} \leq h_x,~\forall d(x_{k|t}) \in \mathcal{D}(x_{k|t}),\\
        & H_u u_{k|t}(x_{k|t}) \leq h_u,~\forall d(x_{k|t}) \in \mathcal{D}(x_{k|t}), \\ 
        & \forall k = t,t+1,\dots t+N-1,\\
        & x_{t+N|t} \in \mathcal{X}_N,\\
        & x_{t|t} = \bar{x}_{t|t} = x_t,
    \end{array}
\end{equation}
where $x_{k|t}$ is the predicted state after applying the predicted policy $[u_{t|t}(x_{t|t}),\dots,u_{k-1|t}(x_{k-1|t})]$ for $k=\{t+1,\dots,t+N\}$ to system \eqref{eq:unc_sys}, $\mathcal{X}_N$ is the terminal set and $P_N \succ 0$ is the terminal cost. In the following sections, we address the three crucial challenges associated to finding solutions of \eqref{eq:mpc_prob_tough}: 
\begin{enumerate}[i)]
    \item Learning and updating the uncertainty bounds $\mathcal{D}(\cdot)$ with data to obtain a nonempty $\mathcal{X}_N$.
    \item Obtaining tractable parametrization of input policy $u(\cdot)$ to avoid searching over infinite dimensional function spaces, and 
    \item Ensuring robust satisfaction of \eqref{eq:constraints} for all times, if tractable reformulation of \eqref{eq:mpc_prob_tough} is feasible once.
\end{enumerate}

\section{Uncertainty Learning and Adaptation}\label{sec:unc_adap}
At every time instant $t$, we assume that we have access to measurements $d(x_{i})$ for all $i=\{0,1,\dots,t-1\}$, that is, the realizations of the uncertainty function. 
    

\subsection{Successive Graph Approximation}
\begin{definition}[Graph]\label{def:graph}
The graph of a function $f: \mathbb{R}^n\rightarrow \mathbb{R}^n$ is defined as the set 
\begin{align*}
    G(f)=\{(x,f(x))\in\mathbb{R}^n\times\mathbb{R}^n\vert\ \forall x\in\mathrm{dom}(f)\}.
\end{align*}
\end{definition}
We use quadratic constraints (QCs) as our main tool to approximate the graph of a function. A definition appropriate for our purposes is presented below. 

\begin{definition}[QC Satisfaction] A set $\mathcal{A}\subset \mathbb{R}^{2n}$ is said to satisfy the quadratic constraint specified by symmetric matrix $Q_c$ if
\begin{equation*}
    \begin{bmatrix}x\\ 1\end{bmatrix}^\top Q_c\begin{bmatrix}x\\ 1\end{bmatrix}\leq 0, \quad \forall x\in\mathcal{A}.
\end{equation*}
\end{definition}
The following proposition uses a QC to characterise a coarse approximation of the graph of an $L_d-$Lipschitz function.
\begin{proposition}\label{prop:qc}
The graph $G(d)$ of the $L_d-$Lipschitz function $d(\cdot)$ inferred  at any time $t$, using the measurement $(x_i,d(x_i))$ for any $0\leq i < t$, satisfies the QC specified by the matrix
\small
\begin{align*}
Q_L^d(x_i)=\begin{bmatrix}
-L_d^2\mathbb{I}_n& \mathbf{0}_{n\times n} & L_d^2x_i\\
\mathbf{0}_{n\times n} & \mathbb{I}_n & -d(x_i)\\
L_d^2 (x_i)^\top & -d^\top(x_i) & -L_d^2  (x_i)^\top x_i \\
& & \hspace{8mm} + d^\top(x_i) d(x_i)
\end{bmatrix},
\end{align*}
where $\mathbb{I}_n$ denotes the identity matrix of size $n \times n$ and $d(x_i) = x_{i+1}-Ax_i - Bu_i(x_i)$. 
\end{proposition}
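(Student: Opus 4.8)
The plan is to substitute an arbitrary point of the graph into the quadratic form associated with $Q_L^d(x_i)$ and show that it collapses, via completion of squares, into exactly the Lipschitz inequality.

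First, fix a measurement index $0\le i<t$; note that $d(x_i)=x_{i+1}-Ax_i-Bu_i(x_i)$ is known from the recorded data and \eqref{eq:unc_sys}, so $Q_L^d(x_i)$ is a well-defined symmetric $(2n+1)\times(2n+1)$ matrix. Take any $x\in\mathrm{dom}(d)$ and form the augmented vector $z=[\,x^\top\ \ d(x)^\top\ \ 1\,]^\top$, which is precisely the vector appearing in the definition of QC satisfaction when applied to the point $(x,d(x))\in G(d)$.

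Next, I would expand $z^\top Q_L^d(x_i)\,z$ block by block, partitioning $z$ into its three pieces of sizes $n$, $n$, and $1$. The two zero off-diagonal blocks annihilate the $x$--$d(x)$ cross terms, and the symmetric blocks $L_d^2 x_i$, $-d(x_i)$ together with the scalar entry $-L_d^2 x_i^\top x_i + d(x_i)^\top d(x_i)$ are exactly what is needed to complete both squares, so that
\[
z^\top Q_L^d(x_i)\,z=\big(\|d(x)\|^2-2\,d(x)^\top d(x_i)+\|d(x_i)\|^2\big)-L_d^2\big(\|x\|^2-2\,x^\top x_i+\|x_i\|^2\big)=\|d(x)-d(x_i)\|^2-L_d^2\|x-x_i\|^2 .
\]
Finally, since $d(\cdot)$ is $L_d$-Lipschitz on $\mathrm{dom}(d)$ and $x_i\in\mathrm{dom}(d)$, we have $\|d(x)-d(x_i)\|\le L_d\|x-x_i\|$, hence the right-hand side is nonpositive. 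Because $x\in\mathrm{dom}(d)$ was arbitrary, every point $(x,d(x))$ of $G(d)$ satisfies the QC specified by $Q_L^d(x_i)$, which is the claim.

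The only real work here is the block-matrix bookkeeping in the expansion — in particular, verifying that the constant entry and the off-diagonal blocks recombine to give exactly $\|d(x)-d(x_i)\|^2-L_d^2\|x-x_i\|^2$ with no residual terms; once that algebraic identity is established, the proposition follows from a single invocation of Lipschitz continuity, so I do not anticipate any substantive obstacle.
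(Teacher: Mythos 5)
Your proof is correct and follows essentially the same route as the paper: the paper's proof simply asserts that the Lipschitz inequality $\|d(x_t)-d(x_i)\|^2 \le L_d^2\|x_t-x_i\|^2$ is equivalent to the quadratic form in $Q_L^d(x_i)$ being nonpositive on $G(d)$, and your block-by-block expansion just makes that equivalence explicit. The algebraic identity you state is indeed exact, so no gap remains.
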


\begin{proof}
Since $d(\cdot)$ is $L_d-$Lipschitz, we have by definition for $(x_t, d(x_t))\in G(d)$ at any time $t$, and $(x_i, d(x_i))$ measured at any $i < t$
\small
\begin{align*}
    &  \Vert (d(x_t)-d(x_i)) \Vert ^2 \leq L_d^2 \Vert (x_t-x_i) \Vert^2, \\
    \iff &\begin{bmatrix} x_t \\ d(x_t) \\ 1 \end{bmatrix}^\top Q_L^d(x_i)\begin{bmatrix}x_t\\d(x_t)\\1 \end{bmatrix}\leq 0, \quad \forall (x_t,d(x_t))\in G(d).
\end{align*}
\end{proof}
\begin{definition}[Envelope]\label{def:envelope}
An envelope of a function $f:\mathbb{R}^n\rightarrow\mathbb{R}^n$ is defined as any set $\mathbf{E}^f\subseteq \mathbb{R}^n\times\mathbb{R}^n$ with the property
\begin{align*}
    G(f)\subseteq \mathbf{E}^f.
\end{align*}
\end{definition}
\begin{corollary}\label{corr:env}
The set defined by 
\begin{align*}
\mathcal{E}(x_i)=\{(x,d)\in\mathbb{R}^{2n} : \begin{bmatrix}x\\d\\1 \end{bmatrix}^\top Q_L^d(x_i)\begin{bmatrix}x\\d\\1 \end{bmatrix}\leq 0\}
\end{align*}
is an \textit{envelope} containing the graph of  $L_d-$Lipschitz function $d(\cdot)$ for all times $t \geq 0$, after collecting measurements $(x_i, d(x_i))$ for any $i = 0,1,\dots,t-1$.
\end{corollary}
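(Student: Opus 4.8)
The plan is to obtain Corollary~\ref{corr:env} as an essentially immediate consequence of Proposition~\ref{prop:qc}. First I would recall that Proposition~\ref{prop:qc}, instantiated with the recorded measurement $(x_i,d(x_i))$, already establishes that every pair $(x_t,d(x_t))\in G(d)$ satisfies the scalar inequality
\[
\begin{bmatrix} x_t \\ d(x_t) \\ 1 \end{bmatrix}^\top Q_L^d(x_i)\begin{bmatrix} x_t \\ d(x_t) \\ 1 \end{bmatrix}\leq 0,
\]
since this is exactly the algebraic restatement of the Lipschitz bound $\|d(x_t)-d(x_i)\|^2\leq L_d^2\|x_t-x_i\|^2$ derived there. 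No new inequality needs to be proved.

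The second step is purely definitional: by construction, $\mathcal{E}(x_i)$ is precisely the set of all pairs $(x,d)\in\mathbb{R}^{2n}$ for which the same quadratic form with matrix $Q_L^d(x_i)$ is nonpositive. Hence the first step says nothing more than $(x,d(x))\in\mathcal{E}(x_i)$ for every $x\in\mathrm{dom}(d)$, i.e., $G(d)\subseteq\mathcal{E}(x_i)$, which by Definition~\ref{def:envelope} is exactly the assertion that $\mathcal{E}(x_i)$ is an envelope of $d(\cdot)$. Finally, I would note that $Q_L^d(x_i)$ depends only on the fixed data point $(x_i,d(x_i))$ (equivalently on $(x_i,x_{i+1},u_i(x_i))$) and not on the running time $t$, and that the Lipschitz property of $d(\cdot)$ holds on all of $\mathrm{dom}(d)\supseteq\mathcal{X}$ uniformly in $t$; therefore the inclusion $G(d)\subseteq\mathcal{E}(x_i)$ persists for every $t\geq 0$, as claimed.

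I do not expect any genuine obstacle here, as the statement is close to a corollary in name and in substance. The only point worth a sentence of care is that $\mathcal{E}(x_i)$, being a subset of $\mathbb{R}^{2n}$, may well contain pairs $(x,d)$ with $x\notin\mathrm{dom}(d)$, but this is harmless because Definition~\ref{def:envelope} requires only the one-sided inclusion $G(d)\subseteq\mathbf{E}^d$. It is also natural to remark that the identical argument applies verbatim to any finite set of measurements $\{(x_i,d(x_i))\}_{i=0}^{t-1}$, giving $G(d)\subseteq\bigcap_{i=0}^{t-1}\mathcal{E}(x_i)$, which is the set-intersection refinement exploited in the sequel.
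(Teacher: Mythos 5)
Your argument is correct and is exactly the route the paper intends: the corollary is an immediate restatement of Proposition~\ref{prop:qc}, since $\mathcal{E}(x_i)$ is by definition the set of triples satisfying the QC that the proposition shows every point of $G(d)$ satisfies, and Definition~\ref{def:envelope} then gives the claim (the paper indeed states it without a separate proof). Your closing remarks on time-independence of $Q_L^d(x_i)$ and on the one-sided nature of the inclusion are accurate and add nothing problematic.
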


\begin{lemma}\label{lem:inter}
Given a sequence of measurements $\{x_i\}_{i=0}^{t-1}$ obtained under dynamics \eqref{eq:unc_sys}, we have 
\begin{align}\label{eq:intenv}
    G(d)\subseteq\bigcap_{i=0}^{t-1} \mathcal{E}(x_i) = \mathbf{E}_t^d.
\end{align}
\end{lemma}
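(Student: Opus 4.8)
The plan is to derive the lemma directly from Corollary~\ref{corr:env}, so the argument is short. Fix an index $i$ with $0 \le i \le t-1$. The measurement pair $(x_i, d(x_i))$ is a genuine realization of the uncertainty along the closed-loop trajectory generated by \eqref{eq:unc_sys}, so $d(x_i) = x_{i+1} - A x_i - B u_i(x_i)$ exactly as in Proposition~\ref{prop:qc}. First I would invoke Corollary~\ref{corr:env} with this single measurement: since $d(\cdot)$ is $L_d$-Lipschitz, every point $(x, d(x)) \in G(d)$ satisfies the QC specified by $Q_L^d(x_i)$, hence $G(d) \subseteq \mathcal{E}(x_i)$. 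The crucial observation is that this containment holds for each $i$ separately and does not depend on the other collected measurements.

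Second, I would use the elementary set-theoretic fact that a set contained in each member of a family of sets is contained in their intersection. Applying this to the family $\{\mathcal{E}(x_i)\}_{i=0}^{t-1}$ yields $G(d) \subseteq \bigcap_{i=0}^{t-1} \mathcal{E}(x_i)$, and the right-hand side is by definition $\mathbf{E}_t^d$, which is the claimed inclusion \eqref{eq:intenv}. That $\mathbf{E}_t^d$ is itself an envelope of $d(\cdot)$ in the sense of Definition~\ref{def:envelope} is then immediate from Definition~\ref{def:envelope}.

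There is essentially no obstacle here; the only point worth stating carefully is that each $\mathcal{E}(x_i)$ is a valid envelope for all times $t \ge 0$ irrespective of which additional data have been recorded — this is precisely what Corollary~\ref{corr:env} asserts — so that intersecting over a growing data set only shrinks the envelope while always retaining $G(d)$. It is also worth remarking that the intersection is nonempty, since it contains $G(d)$, which is nonempty over $\mathrm{dom}(d)$; this is what makes the successive refinement meaningful for the terminal-set and MPC construction developed in the subsequent sections.
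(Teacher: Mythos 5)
Your proposal is correct: applying Corollary~\ref{corr:env} to each measurement $(x_i,d(x_i))$ individually gives $G(d)\subseteq\mathcal{E}(x_i)$ for every $i$, and the containment in the intersection follows immediately, which is exactly the intended argument — the paper itself simply defers the proof to the cited reference rather than spelling it out. No gaps; the remarks on nonemptiness and monotone shrinkage are accurate, though not needed for the statement itself.
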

\begin{proof}
See \cite[Lemma~1]{nair_LarX}. 
\end{proof}

\subsection{Uncertainty Estimation at a Given State}
\label{ssec:estim_function}
We wish to obtain a set where the possible realizations of $d(x_t)$ can lie, which we denote by $\mathcal{D}(x_t)$, for any $x_t \in \mathcal{X}$. Using the collected tuple $(x_i, d(x_i))$ from any time instant $i<t$, we can obtain a set based estimate of the range of possible values of $d(x_t)$, called the \emph{sampled range set} as, 
\begin{align*}
    \mathcal{S}(x_i, x_t) & :=  \mathcal{E}(x_i) \Big \vert_{x = x_t} = \{d :  \begin{bmatrix} x_t \\ d \\ 1 \end{bmatrix}^\top Q^d_L(x_i) \begin{bmatrix} x_t \\ d \\ 1 \end{bmatrix}  \leq 0\},
\end{align*}
for any $i < t$. 
As we successively collect $(x_i,d(x_i))$ for $i=\{0,1,\dots, t-1\}$, the set of possible values of $d(x_t)$ is obtained and refined with intersection operations as
\begin{align}\label{eq:func_dom}
    \mathcal{D}(x_t) = \bigcap_{i=0}^{t-1} \mathcal{S}(x_i, x_t)=\bigcap_{i=0}^{t-1} \mathcal{E}(x_i) \Big \vert_{x = x_t},
\end{align}
with the guarantee $d(x_t) \in \mathcal{D}(x_t)$ at any given time $t \geq 0$. We further note that the set $\mathcal{D}(x_t)$ is convex, as it is an intersection of convex sets \cite{nair_LarX}. 

\begin{proposition}\label{prop:containment}
Consider a specific state $\tilde{x}$, at time instants $t_1$ and $t_2$, with $t_1 <t_2$. Denote them by $\tilde{x}_{t_1}$ and $\tilde{x}_{t_2}$ respectively. Then we have $\mathcal{D}(\tilde{x}_{t_2}) \subseteq \mathcal{D}(\tilde{x}_{t_1})$. 
\end{proposition}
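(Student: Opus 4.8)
The plan is to unwind the definition \eqref{eq:func_dom} of $\mathcal{D}(\cdot)$ and exploit the monotonicity of intersection with respect to the index set over which it is taken. First I would observe that the sampled range set $\mathcal{S}(x_i,\tilde{x}) = \mathcal{E}(x_i)\big\vert_{x=\tilde{x}}$ depends only on the past measurement $(x_i,d(x_i))$, through the matrix $Q_L^d(x_i)$, and on the query point $\tilde{x}$. Since the data recorded up to time $t_1$ is not altered by samples collected afterwards, the sets $\mathcal{E}(x_i)$, and hence their slices $\mathcal{S}(x_i,\tilde{x})$, are identical whether formed at time $t_1$ or at time $t_2$, for every $i\in\{0,1,\dots,t_1-1\}$. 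This is the only point that requires a moment of care; once it is granted, the rest is immediate.

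Next, using $t_1<t_2$, the index set $\{0,1,\dots,t_1-1\}$ is contained in $\{0,1,\dots,t_2-1\}$. Splitting the intersection in \eqref{eq:func_dom}, evaluated at the common point $\tilde{x}$, along this partition gives
\begin{align*}
\mathcal{D}(\tilde{x}_{t_2}) &= \bigcap_{i=0}^{t_2-1}\mathcal{S}(x_i,\tilde{x}) = \Big(\bigcap_{i=0}^{t_1-1}\mathcal{S}(x_i,\tilde{x})\Big)\cap\Big(\bigcap_{i=t_1}^{t_2-1}\mathcal{S}(x_i,\tilde{x})\Big)\\
&\subseteq \bigcap_{i=0}^{t_1-1}\mathcal{S}(x_i,\tilde{x}) = \mathcal{D}(\tilde{x}_{t_1}),
\end{align*}
which is the claimed inclusion.

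There is essentially no obstacle here: beyond the bookkeeping observation that the envelopes indexed by $i<t_1$ are unchanged between times $t_1$ and $t_2$, the result is just the elementary fact that intersecting over a larger family of sets can only shrink the result. I would also note in passing that the same reasoning shows the family $\{\mathcal{D}(\tilde{x}_t)\}_{t\geq 0}$ is monotonically non-increasing in $t$ while always containing $d(\tilde{x})$ by Corollary~\ref{corr:env}, consistent with the nested envelope construction of Lemma~\ref{lem:inter}.
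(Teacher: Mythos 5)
Your proof is correct and follows essentially the same route as the paper: both decompose $\mathcal{D}(\tilde{x}_{t_2})=\bigcap_{i=0}^{t_2-1}\mathcal{S}(x_i,\tilde{x})$ into the intersection over $i<t_1$ (which equals $\mathcal{D}(\tilde{x}_{t_1})$, since past sampled range sets are unaffected by later data) intersected with the sets indexed by $t_1\leq i<t_2$, and conclude by monotonicity of intersection. Your explicit remark that the slices $\mathcal{S}(x_i,\tilde{x})$ for $i<t_1$ are identical at both times is a small but welcome clarification that the paper leaves implicit.
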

\begin{proof}
See Appendix.
\end{proof}
\section{Robust Adaptive MPC Formulation}\label{sec:data_dmpc}
The main challenges addressed in this section are:
\begin{enumerate}
    \item Generalizing \eqref{eq:func_dom} to obtain set based uncertainty bounds along the prediction horizon of the MPC problem \eqref{eq:mpc_prob_tough},
    \item Posing a tractable robust optimization problem to solve \eqref{eq:mpc_prob_tough} with feasibility guarantees. 
\end{enumerate}

\subsection{Uncertainty Sets Along the MPC Horizon}\label{ssec:bounding_unc_set}
\begin{definition}{Robust Controllable States:} The \emph{1-Step Robust Controllable States} from any set $\mathcal{A}$ is defined as
\begin{align*}
    \mathrm{Succ}(\mathcal{A},\mathcal{W})  := & \{ x^+ \in \mathcal{X}: \exists x \in \mathcal{A}, \exists u \in \mathcal{U}, \exists w \in \mathcal{W},\\& ~~~~~~~~~~~~~~~\mathrm{s.t. }~ x^+ = Ax + Bu + w \}, 
\end{align*}
with state constraints $\mathcal{X}$ defined in \eqref{eq:state_con}.
\end{definition}
Given any state $x_t$, an s-procedure based approach to obtain an ellipsoidal outer approximation to $\mathcal{D}(x_t)$, denoted by $E^d(x_t)$, is presented in \cite[Section~V-A]{nair_LarX}. We then successively obtain ellipsoidal outer approximations for uncertainty sets $\mathcal{D}(\mathcal{X}_{k|t})$, that is, $E^d(\mathcal{X}_{k|t}) \supseteq  \mathcal{D}(\mathcal{X}_{k|t})$, with 
\begin{align*}
    \mathcal{D}(\mathcal{X}_{k|t}) = \bigcup_{x_{k|t} \in \mathcal{X}_{k|t}}~ \mathcal{D}(x_{k|t}),
\end{align*}
where 
\begin{subequations}\label{eq:pred_succ}
\begin{align}
    & \mathcal{X}_{k|t} \supseteq \mathrm{Succ}(\mathcal{X}_{k-1|t},E^{d}(\mathcal{X}_{k-1|t})),\label{eq:succ_hor}\\
    & \forall k = t+1,t+2,\dots,t+N, \nonumber \\
    &\mathcal{X}_{t|t} = x_t,~\mathcal{X}_{t+N|t} = \mathcal{X}.\label{eq:X_last}
\end{align}
\end{subequations}
Let sets $E^d(\mathcal{X}_{k|t})$ for any $k=\{t,t+1,\dots,t+N\}$ be 
\begin{align}\label{eq:d_set_def}
    E^d(\mathcal{X}_{k|t}) & := \{d: (d-p^d_{k|t})^\top q^d_{k|t}(d-p^d_{k|t}) \leq 1\},\nonumber \\
    & := \begin{bmatrix}d \\1 \end{bmatrix}^\top  \bar{P}^d_{k|t}  \begin{bmatrix} d\\1 \end{bmatrix} \leq 0,
\end{align}
with $\bar{P}^d_{k|t} = \begin{bmatrix} q^d_{k|t}&-q^d_{k|t} p^d_{k|t} \\-(p^d_{k|t})^\top q^d_{k|t} & (p^d_{k|t})^\top q^d_{k|t}(p^d_{k|t})-1\end{bmatrix}$, and center $p^d_{k|t} \in \R^n$ and positive definite shape matrix $q^d_{k|t} \in \mathbb{S}^n_{++}$ are decision variables. We consider parametrizations of sets $\mathcal{X}_{k|t}$ as
\begin{align}\label{eq:X_setDef}
     {\mathcal{X}}_{k|t} & := \{ x \in \mathbb{R}^n: (x-p^x_{k|t})^\top q^x_{k|t} (x-p^x_{k|t}) \leq 1\}, \nonumber \\
    & := \begin{bmatrix}x \\1 \end{bmatrix}^\top  \bar{P}^x_{k|t}  \begin{bmatrix} x\\1 \end{bmatrix} \leq 0,
\end{align}
where $\bar{P}^x_{k|t} = \begin{bmatrix} q^x_{k|t}& -q^x_{k|t} p^x_{k|t}\\-(p^x_{k|t})^\top q^x_{k|t} & (p^x_{k|t})^\top q^x_{k|t} (p^x_{k|t})-1\end{bmatrix}$ for any $k=\{t,t+1,\dots,t+N\}$. Center $p^x_{k|t} \in \R^n$ and shape matrix $q^x_{k|t} \in \mathbb{S}^n_{++}$ can be successively chosen  satisfying \eqref{eq:succ_hor}, with $p^x_{t|t} = x_t$ and $q^x_{t|t} = \mathrm{diag}(\infty, \dots, \infty) \in \mathbb{S}_{++}^{n}$, if sets $E^d(\mathcal{X}_{k|t})$ are found.
\begin{proposition}\label{prop:containment2}
Using s-procedure, $E^d(\mathcal{X}_{k|t})$ is obtained if the following holds true for some scalars $\{\rho^k_t, \tau^k_0, \tau^k_1, \dots, \tau^k_{t-1}\} \geq 0$ at each $k = \{t,t+1,\dots,t+N\}$, for all times $t \geq 0$:
\begin{align}\label{eq:bigE}
&\begin{bmatrix} -\rho^k_t q^x_{k|t}& 0 & \rho^k_t q^x_{k|t}p^{x}_{k|t}\\0& q^d_{k|t} & -q^d_{k|t} p^{d}_{k|t} \\ \rho^k_t (p^{x}_{k|t})^\top q^x_{k|t} &-(p^{d}_{k|t})^\top q^d_{k|t} & (p^d_{k|t})^\top q^d_{k|t}(p^d_{k|t}) -1 \\
& & \hspace{2mm} + \rho^k_t  -\rho^k_t (p^{x}_{k|t})^\top q^x_t(p^{x}_{k|t})  \end{bmatrix}\nonumber\\ 
&
~~~~~~~~~~~~~~~~~~~~~~~~~~~~~~~~~~~~~~~~~~~ - \sum \limits_{i=0}^{t-1} \tau^k_i Q_L^d(x_i) \preceq 0.
\end{align}
\end{proposition}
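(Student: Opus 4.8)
The plan is to recognize \eqref{eq:bigE} as the matrix inequality produced by the (sufficient direction of the) S-procedure applied to the set containment $\mathcal{D}(\mathcal{X}_{k|t})\subseteq E^d(\mathcal{X}_{k|t})$, and then to verify by direct block-matrix bookkeeping that the displayed $3\times 3$ block matrix is exactly the lifted difference of the two ellipsoidal quadratic forms.

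First I would unwind the definitions. By \eqref{eq:func_dom} together with Corollary~\ref{corr:env}, for every point $x\in\mathcal{X}_{k|t}$ one has $\mathcal{D}(x)=\{d:(x,d)\in\mathcal{E}(x_i),~i=0,\dots,t-1\}$, so that
\begin{align*}
    \mathcal{D}(\mathcal{X}_{k|t}) = \Big\{ d\in\R^n :~ \exists\, x \text{ with } \begin{bmatrix}x\\1\end{bmatrix}^\top \bar{P}^x_{k|t}\begin{bmatrix}x\\1\end{bmatrix}\le 0,~ \begin{bmatrix}x\\d\\1\end{bmatrix}^\top Q_L^d(x_i)\begin{bmatrix}x\\d\\1\end{bmatrix}\le 0,~i=0,\dots,t-1 \Big\}.
\end{align*}
Consequently $E^d(\mathcal{X}_{k|t})$ from \eqref{eq:d_set_def} is a valid outer approximation, $E^d(\mathcal{X}_{k|t})\supseteq\mathcal{D}(\mathcal{X}_{k|t})$, precisely when, for all $(x,d)\in\R^{2n}$,
\begin{align*}
    \left( \begin{bmatrix}x\\1\end{bmatrix}^\top \bar{P}^x_{k|t}\begin{bmatrix}x\\1\end{bmatrix}\le 0 ~\text{ and }~ \begin{bmatrix}x\\d\\1\end{bmatrix}^\top Q_L^d(x_i)\begin{bmatrix}x\\d\\1\end{bmatrix}\le 0~\forall i \right) ~\Longrightarrow~ \begin{bmatrix}d\\1\end{bmatrix}^\top \bar{P}^d_{k|t}\begin{bmatrix}d\\1\end{bmatrix}\le 0 .
\end{align*}

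Next I would invoke the S-procedure. Writing $z=[x^\top~d^\top~1]^\top$ and letting $\widehat{\bar{P}}^x_{k|t},\widehat{\bar{P}}^d_{k|t}\in\R^{(2n+1)\times(2n+1)}$ be the liftings of $\bar{P}^x_{k|t},\bar{P}^d_{k|t}$ that act on the $(x,1)$ and $(d,1)$ entries of $z$ respectively (zeros elsewhere), a standard sufficient condition for the implication above is the existence of multipliers $\rho^k_t\ge 0$ and $\tau^k_0,\dots,\tau^k_{t-1}\ge 0$ with
\begin{align*}
    \widehat{\bar{P}}^d_{k|t} - \rho^k_t\,\widehat{\bar{P}}^x_{k|t} - \sum_{i=0}^{t-1}\tau^k_i\, Q_L^d(x_i) \preceq 0 .
\end{align*}
Indeed, for any $(x,d)\in\mathcal{D}(\mathcal{X}_{k|t})$, evaluating the left-hand quadratic form at the corresponding $z$ and using $\rho^k_t,\tau^k_i\ge 0$ together with the two families of inequalities that define $\mathcal{D}(\mathcal{X}_{k|t})$ forces $\begin{bmatrix}d\\1\end{bmatrix}^\top\bar{P}^d_{k|t}\begin{bmatrix}d\\1\end{bmatrix}\le 0$, i.e. $d\in E^d(\mathcal{X}_{k|t})$.

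The remaining step is pure bookkeeping: substitute the block forms of $\bar{P}^x_{k|t}$ and $\bar{P}^d_{k|t}$ from \eqref{eq:d_set_def}--\eqref{eq:X_setDef} and collect the $(x,x)$, $(d,d)$, $(x,1)$, $(d,1)$ and $(1,1)$ blocks of $\widehat{\bar{P}}^d_{k|t}-\rho^k_t\widehat{\bar{P}}^x_{k|t}$; this reproduces exactly the first matrix in \eqref{eq:bigE}, with $(1,1)$ entry $(p^d_{k|t})^\top q^d_{k|t}p^d_{k|t}-1+\rho^k_t-\rho^k_t(p^x_{k|t})^\top q^x_{k|t}p^x_{k|t}$. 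Hence \eqref{eq:bigE} is precisely the S-procedure LMI guaranteeing $E^d(\mathcal{X}_{k|t})\supseteq\mathcal{D}(\mathcal{X}_{k|t})$, which is the assertion of the proposition. I do not expect a genuine obstacle: sufficiency of the S-procedure with several quadratic constraints is elementary and the rest is matrix algebra. The only delicate point is the degenerate index $k=t$, where $\mathcal{X}_{t|t}=\{x_t\}$ is a singleton (formally $q^x_{t|t}=\mathrm{diag}(\infty,\dots,\infty)$); there one drops the $\rho^k_t$-term, substitutes $x=x_t$ directly, and recovers the ellipsoidal outer bound $E^d(x_t)$ of $\mathcal{D}(x_t)$ from \cite[Section~V-A]{nair_LarX}.
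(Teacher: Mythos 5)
Your proposal is correct and follows essentially the same route as the paper: the paper's proof likewise multiplies the LMI \eqref{eq:bigE} on both sides by $[x^\top\ d^\top\ 1]^\top$, uses nonnegativity of $\rho^k_t,\tau^k_i$ together with the QCs from Corollary~\ref{corr:env} and the ellipsoidal description \eqref{eq:X_setDef} of $\mathcal{X}_{k|t}$, and concludes $[d^\top\ 1]\bar{P}^d_{k|t}[d^\top\ 1]^\top\le 0$, which is exactly your S-procedure-plus-block-bookkeeping argument. Your additional remarks (covering all $d$ satisfying the QCs rather than only graph points, and the degenerate $k=t$ case) are consistent refinements, not a different method.
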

\begin{proof}
See Appendix. 
\end{proof}
We reformulate the feasibility problem \eqref{eq:bigE} as a Semi-definite Program (SDP) in the Appendix. After finding $E^d(\mathcal{X}_{k|t})$ using \eqref{eq:bigE}, to efficiently compute \eqref{eq:succ_hor}, we use polytopic\footnote{choice of this polytope is designer specific} outer approximations  $\mathcal{P}^d(\mathcal{X}_{k|t}) \supseteq E^d(\mathcal{X}_{k|t})$ instead of $E^d(\mathcal{X}_{k|t})$, given by
\begin{align} \label{eq:uncer_predBound}
    & \mathcal{P}^d(\mathcal{X}_{k|t}) := \{d: H^d_{k|t}d \leq h^d_{k|t}\},\\
    & \forall k= \{t,t+1,\dots,t+N\} \nonumber. 
\end{align}

\begin{remark}
Consider the state $x_{k|t}$ for prediction step $k$ at time $t$ in \eqref{eq:mpc_prob_tough}. From Proposition~\ref{prop:containment2} we know that $d(x_{k|t}) \in \mathcal{D}(\mathcal{X}_{k|t}) \Rightarrow d(x_{k|t}) \in \mathcal{P}^d(\mathcal{X}_{k|t})$, but $d(x_{k|t}) \in \mathcal{P}^{d}(\mathcal{X}_{k|t}) \nRightarrow d(x_{k|t}) \in \mathcal{D}(\mathcal{X}_{k|t})$. As a consequence, $\mathcal{P}^d(\mathcal{X}_{k|t}) \nsubseteq \mathcal{P}^d(\mathcal{X}_{k|t-1})$ is  possible. Hence, for ensuring recursive feasibility of solved MPC problem (detailed in Theorem~\ref{thm1}), we impose constraints in \eqref{eq:mpc_prob_tough} robustly for all $d(x_{k|t})$ satisfying  
\begin{align}\label{eq:prev_unc}
    & d(x_{k|t}) \in \mathcal{P}^d(\mathcal{X}_{k|t}) \cap \mathcal{P}^d(\mathcal{X}_{k|t-1}),\\
    & \forall k \in \{t,\dots,t+N-1\} \nonumber,  
\end{align}
with the initialization $\{q^d_{-1|-1},q^d_{0|-1},\dots,q^d_{N-2|-1}\} = \{0_{n \times n},\dots,0_{n \times n}\} \in \R^{n \times Nn}$. 
\end{remark}

\subsection{Control Policy Parametrization}
We restrict ourselves to the affine disturbance feedback parametrization \cite{Goulart2006,lofberg2003minimax} for control synthesis in \eqref{eq:mpc_prob_tough}. For all $k \in \{t,\cdots,t+N-1\}$ over the MPC horizon (of length $N$), the control policy is given as:
\begin{equation}\label{eq:inputParam_DF_OL}
	u_{k|t}(x_{k|t}) = \sum \limits_{l=t}^{k-1}M_{k,l|t} d(x_{l|t})  + v_{k|t},
\end{equation}
where $M_{k|t}$ are the \emph{planned} feedback gains at time $t$ and $v_{k|t}$ are the auxiliary inputs. Let us define $\mathbf{d}(x_t) = [d(x_{t|t}),\cdots,d(x_{t+N-1|t})]^\top \in \mathbb{R}^{nN}$. Then the sequence of predicted inputs from \eqref{eq:inputParam_DF_OL} can be compactly written as $\mathbf{u}_t = \mathbf{M}_t\mathbf{d}(x_t) + \mathbf{v}_t$ at any time $t$, where $\mathbf{M}_t\in \mathbb{R}^{mN \times nN}$ and $\mathbf{v}_t \in \mathbb{R}^{mN}$ are
\begin{equation*}
\begin{aligned}
    \mathbf{M}_t & =  \begin{bmatrix}0& \cdots&\cdots&0\\
  M_{t+1,t}& 0 & \cdots & 0\\
  \vdots &\ddots& \ddots &\vdots\\
  M_{t+N-1,t}& \cdots& M_{t+N-1,t+N-2}& 0
  \end{bmatrix}, \\ 
   \mathbf{v}_t & = [v_{t|t}^\top, \cdots, \cdots, v_{t+N-1|t}^\top]^\top.
\end{aligned}
\end{equation*} 
\subsection{Terminal Conditions}
We use state feedback to construct terminal set $\mathcal{X}_N$, which is the maximal robust positive invariant set \cite{kolmanovsky1998theory} obtained with a state feedback controller $u=Kx$, dynamics \eqref{eq:unc_sys} and constraints \eqref{eq:constraints}. This set has the properties
\begin{equation}\label{eq:term_set_DF}
    \begin{aligned}
    &\mathcal{X}_N \subseteq \{x|H_x x \leq h_x,~H_uKx \leq h_u\},\\
    &(A+BK)x + d(x) \in \mathcal{X}_N,~\\
    &\forall x\in \mathcal{X}_N,~\forall d(x) \in \mathcal{P}^d(\mathcal{X}).
    \end{aligned}
\end{equation}
Fixed point iteration algorithms to numerically compute \eqref{eq:term_set_DF} can be found in \cite{borrelli2017predictive, kouvaritakis2016model}. 
\subsection{Tractable MPC Problem}
The tractable MPC optimization problem at time $t$ is given by:
\begin{equation} \label{eq:MPC_R_fin}
	\begin{aligned}
		& \min_{\mathbf{M}_t, \mathbf{v}_t} ~~ \sum_{k=t}^{t+N-1} (\bar{x}_{k|t}^\top Q \bar{x}_{k|t} +  v_{k|t}^\top R v_{k|t}) + \bar{x}_{t+N|t}^\top P_N \bar{x}_{t+N|t} \\
		& ~~~\text{s.t}~~~~~~    x_{k+1|t} = Ax_{k|t} + Bu_{k|t}(x_{k|t}) + d(x_{k|t}),\\
		& ~~~~~~~~~~~~\bar{x}_{k+1|t} = A\bar{x}_{k|t} + Bv_{k|t} + \bar{d}_{k|t},\\
		&~~~~~~~~~~~~u_{k|t}(x_{k|t}) = \sum \limits_{l=t}^{k-1}M_{k,l|t} d(x_{l|t})  + v_{k|t},\\
		&~~~~~~~~~~~~H_x x_{k+1|t} \leq h_x, \\
		&~~~~~~~~~~~~ H_u u_{k|t}(x_{k|t}) \leq h_u,\\
	    & ~~~~~~~~~~~~ \forall d(x_{k|t}) \in \mathcal{P}^d(\mathcal{X}_{k|t}) \cap \mathcal{P}^d(\mathcal{X}_{k|t-1}),\\
	    & ~~~~~~~~~~~~ \forall k = \{t,\ldots,t+N-1\},\\
        	&~~~~~~~~~~~~x_{t+N|t} \in \mathcal{X}_N,~d(x_{N|t}) \in \mathcal{P}^d(\mathcal{X}),\\
        	&~~~~~~~~~~~~x_{t|t} = x_t, \ \bar{x}_{t|t} = x_t, \  \bar{d}_{k|t} \in \mathcal{P}^d(\mathcal{X}_{k|t}),
	\end{aligned}
\end{equation}
where $x_{k|t}$ is the predicted state after applying the predicted policy $[u_{t|t}(x_{t|t}),\dots,u_{k-1|t}(x_{k-1|t})]$ for $k=\{t+1,\dots,t+N\}$ to system \eqref{eq:unc_sys}, and the control invariant \cite{blanchini1999set} terminal set is $\mathcal{X}_N$. The parameters $\{p^d_{k|t}, q^d_{k|t}\}$ for $k=\{t,t+1,\dots,t+N\}$, that is, uncertainty containment ellipses in \eqref{eq:MPC_R_fin}, are computed before solving \eqref{eq:MPC_R_fin} at each time $t$, by finding solutions of \eqref{eq:bigE}. Nominal uncertainty estimate $\bar{d}_{k|t}$ is chosen as the Chebyshev center (i.e, center of the largest volume $\ell_2$ ball in a set) of $\mathcal{P}^d(\mathcal{X}_{k|t})$. After solving \eqref{eq:MPC_R_fin} at time $t$, in closed-loop we apply 
\begin{align}\label{eq:mpc_law}
    u_t(x_t) = v^\star_{t|t},
\end{align}
to system \eqref{eq:unc_sys} and then resolve \eqref{eq:MPC_R_fin} at $t+1$.

\begin{remark}
Terminal set $\mathcal{X}_N$ might be empty initially, due to conservatism resulting from a large volume of the set $\mathcal{P}^d(\mathcal{X})$. As more data is collected and the graph of $d(\cdot)$ is refined as in \eqref{eq:intenv}--\eqref{eq:func_dom}, $E^d(\mathcal{X})$, and so $\mathcal{P}^d(\mathcal{X})$ is refined with new data by solving \eqref{eq:bigE} (for only $k=t+N$, if data collected until instant $t$) with an updated $Q^d_L(\cdot)$. This eventually results in a nonempty $\mathcal{X}_N$. Once \eqref{eq:MPC_R_fin} is feasible with this $\mathcal{X}_N$, during the control process one may further update and enlarge $\mathcal{X}_N$ to lower conservatism of \eqref{eq:MPC_R_fin}. 
\end{remark}

\begin{algorithm}[h!]
    \caption{
 Robust Adaptive MPC
    }
    \label{alg1}
    \begin{algorithmic}[1]
      \Statex \hspace{-1.2em}\textbf{Initialize:} $\mathcal{P}^d(\mathcal{X}) = \R^n$; $j =0$; 
      \vspace{1.2mm}
      \Statex \hspace{-1.2em} \emph{begin exploration (offline)}

      \WHILE{$\mathcal{X}_N$ is empty}     
    
      \STATE Apply exploration inputs $u_j$ to \eqref{eq:unc_sys}. Collect
      
      \hspace{-5mm} $(x_j,d(x_j))$ at $j+1$. Set $j=j+1$; 
      
      \STATE Solve \eqref{eq:bigE} with $k=j+N$ to get $\mathcal{P}^d(\mathcal{X})$. 
      
      \hspace{-5mm} Compute $\mathcal{X}_N$ from \eqref{eq:term_set_DF};
      \ENDWHILE
      \Statex \hspace{-1.2em} \emph{end exploration} \textbf{set $j_{\mathrm{max}} \equiv t=0$}. 
      \vspace{1.2mm}
      
      \Statex \hspace{-1.2em} \emph{begin control process (online)} 
      \WHILE{during control for $t \geq 0$}
      
      \STATE Obtain $\mathcal{P}^d(\mathcal{X}_{k|t})$ for $k=\{t,t+1,\dots,t+N-1\}$ 
      
      \hspace{-5mm} from feasibility of  \eqref{eq:bigE};

      \IF{larger $\mathcal{X}_N$ desired}

       Update $\mathcal{P}^d(\mathcal{X})$ from \eqref{eq:bigE} (with $k=t+N$). 
       
       \hspace{-1mm} Update $\mathcal{X}_N$ from \eqref{eq:term_set_DF};
      \ENDIF      
      
      \STATE Solve \eqref{eq:MPC_R_fin} and apply MPC control \eqref{eq:mpc_law} to \eqref{eq:unc_sys};
      
      \ENDWHILE
      \end{algorithmic}
\end{algorithm}

\begin{theorem}\label{thm1}
Let optimization problem \eqref{eq:MPC_R_fin} be feasible at time $t=0$. Assume the state dependent uncertainty $d(\cdot)$ bounds along the horizon are obtained using \eqref{eq:bigE},~\eqref{eq:pred_succ}, and \eqref{eq:uncer_predBound}. Then,  \eqref{eq:MPC_R_fin} remains feasible at all times $t\geq 0$, if the state $x_t$ is  obtained by applying  the closed-loop MPC control law \eqref{eq:mpc_law} to system \eqref{eq:unc_sys}.
\end{theorem}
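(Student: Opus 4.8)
I would argue recursive feasibility by induction on $t$, the base case $t=0$ being the hypothesis. Suppose \eqref{eq:MPC_R_fin} is feasible at time $t$ with a feasible pair $(\mathbf{M}_t^\star,\mathbf{v}_t^\star)$ and associated predicted states $x_{k|t}^\star$, and let $u_t=v_{t|t}^\star$ be applied, producing the measured $x_{t+1}=Ax_t+Bv_{t|t}^\star+d(x_t)$. The candidate at $t+1$ is the standard shift-and-append policy: for $k=t+1,\dots,t+N-1$ keep $M_{k,l|t+1}=M_{k,l|t}^\star$ for $l\geq t+1$ and set $v_{k|t+1}=v_{k|t}^\star+M_{k,t|t}^\star d(x_t)$, absorbing the now-known vector $d(x_t)$ into the feedforward term; for the appended stage $k=t+N$ use the terminal law $u_{t+N|t+1}=Kx_{t+N|t+1}$; the nominal sequence is propagated from $\bar x_{t+1|t+1}=x_{t+1}$ with $\bar d_{k|t+1}$ the Chebyshev center of $\mathcal P^d(\mathcal X_{k|t+1})$. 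By construction the closed-loop maps $d(x_{l|t+1})\mapsto(u_{k|t+1},x_{k+1|t+1})$ for $l,k\le t+N-1$ equal those of the $t$-solution restricted to the affine slice in which $d(x_{t|t})$ is fixed at $d(x_t)$.

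Next I would record the ``in-tube'' facts. Since $x_t\in\mathcal X_{t|t-1}$ (from feasibility at $t-1$, $x_t\in\mathrm{Succ}(\mathcal X_{t-1|t-1},E^d(\mathcal X_{t-1|t-1}))\subseteq\mathcal X_{t|t-1}$ via \eqref{eq:succ_hor}; for $t=0$ this is vacuous as $\mathcal P^d(\mathcal X_{\cdot|-1})=\R^n$), we get $d(x_t)\in\mathcal D(x_t)\subseteq E^d(\mathcal X_{t|t})\cap E^d(\mathcal X_{t|t-1})\subseteq\mathcal P^d(\mathcal X_{t|t})\cap\mathcal P^d(\mathcal X_{t|t-1})$, so the realized first transition is covered by the robustness guarantee the $t$-solution imposed at stage $t$; in particular $H_xx_{t+1}\leq h_x$ and, using $v_{t|t}^\star\in\mathcal U$, $x_{t+1}\in\mathrm{Succ}(\mathcal X_{t|t},E^d(\mathcal X_{t|t}))\subseteq\mathcal X_{t+1|t}$. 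Starting the new horizon from this point and iterating \eqref{eq:succ_hor}, monotonicity of $\mathrm{Succ}(\cdot,\cdot)$ in both arguments and of $\mathcal D(\cdot)$, together with Proposition~\ref{prop:containment} (extra data only shrinks $\mathcal D$ at a given state), let me choose the tubes $\mathcal X_{k|t+1}$ and envelopes $E^d(\mathcal X_{k|t+1})$ no larger than the ones carried over from time $t$ — which remain valid envelopes for the smaller sets — and hence no larger than those from time $t-1$. This yields the crucial inclusion $\mathcal P^d(\mathcal X_{k|t+1})\cap\mathcal P^d(\mathcal X_{k|t})\subseteq\mathcal P^d(\mathcal X_{k|t})\cap\mathcal P^d(\mathcal X_{k|t-1})$ for every $k\in\{t+1,\dots,t+N-1\}$, which is precisely the reason \eqref{eq:prev_unc} intersects with the previous step's set.

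The constraint check is then mechanical. For $k\in\{t+1,\dots,t+N-1\}$ the candidate was certified robust by the $t$-solution over a box containing the one imposed by \eqref{eq:prev_unc} at time $t+1$, so $H_xx_{k+1|t+1}\leq h_x$ and $H_uu_{k|t+1}(x_{k|t+1})\leq h_u$ for all admissible $d$; in particular $x_{t+N|t+1}\in\mathcal X_N$. For the appended stage, $x_{t+N|t+1}\in\mathcal X_N\subseteq\mathcal X$ gives $d(x_{t+N|t+1})\in\mathcal D(x_{t+N|t+1})\subseteq\mathcal P^d(\mathcal X)$, so \eqref{eq:term_set_DF} delivers $H_uKx_{t+N|t+1}\leq h_u$ and $x_{t+N+1|t+1}=(A+BK)x_{t+N|t+1}+d(x_{t+N|t+1})\in\mathcal X_N\subseteq\mathcal X$; also $\bar d_{k|t+1}\in\mathcal P^d(\mathcal X_{k|t+1})$ by choice, and the nominal recursion is well defined. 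Hence $(\mathbf M_{t+1},\mathbf v_{t+1})$ is feasible for \eqref{eq:MPC_R_fin} at $t+1$, completing the induction.

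The step I expect to require the most care is the tube-compatibility inclusion in the second paragraph. Its subtlety is that although $\mathcal D(\cdot)$ itself only shrinks over time and with shrinking domain, the ellipsoidal and polytopic outer approximations produced by \eqref{eq:bigE}--\eqref{eq:uncer_predBound} need not nest across time steps (see the remark following \eqref{eq:uncer_predBound}); one must argue that a valid choice of $E^d(\mathcal X_{k|t+1})$ lying inside the stored approximation from time $t-1$ exists — it does, because that stored set is still an envelope projection for the smaller domain $\mathcal X_{k|t+1}$ — and that the feasibility SDP \eqref{eq:bigE} stays solvable for it, so the algorithm never stalls. Everything else (the shift bookkeeping for $\mathbf M_t,\mathbf v_t$, terminal invariance via \eqref{eq:term_set_DF}, and the nominal propagation) is routine.
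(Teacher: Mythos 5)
Your argument follows the paper's proof in essentially the same structure: shift the time-$t$ optimal policy, append the terminal controller $Kx_{t+N|t+1}$, certify the shifted stage constraints via the intersection \eqref{eq:prev_unc} together with Proposition~\ref{prop:containment}, and close with the robust invariance of $\mathcal{X}_N$ from \eqref{eq:term_set_DF}. The additional bookkeeping you provide --- the explicit absorption of the measured $d(x_t)$ into the feedforward terms, and the argument that nested choices of $E^d(\mathcal{X}_{k|t+1})$ (and hence of the polytopes $\mathcal{P}^d(\mathcal{X}_{k|t+1})$) exist so that the time-$(t+1)$ uncertainty sets are covered by the time-$t$ robustness certificate --- is a more detailed rendering of the step the paper asserts directly from \eqref{eq:prev_unc} and Proposition~\ref{prop:containment}, and correctly pinpoints where the care is needed given that the outer approximations need not nest automatically.
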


\begin{proof}
Let the optimization problem \eqref{eq:MPC_R_fin} be feasible at time $t$. Let us denote the corresponding optimal input policies as $[\pi^\star_{t|t}(\cdot),\pi^\star_{t+1|t}(\cdot),\cdots,\pi^\star_{t+N-1|t}(\cdot)]$. Assume the MPC controller $\pi^\star_{t|t}(\cdot)$ is applied to \eqref{eq:unc_sys} in closed-loop and $E^d(\mathcal{X}_{k|t+1})$ for $k=\{t+1,t+2,\dots,t+N+1\}$ are obtained according to \eqref{eq:bigE},~\eqref{eq:uncer_predBound} and \eqref{eq:pred_succ}. Consider a candidate policy sequence at the next time instant as:
\begin{align}\label{eq:feas_seq_next_DF_sto}
    \Pi_{t+1}(\cdot) = [\pi^\star_{t+1|t}(\cdot),\dots,\pi^\star_{t+N-1|t}(\cdot),Kx_{t+N|t+1}].
\end{align}
From \eqref{eq:prev_unc} and Proposition~\ref{prop:containment} we conclude that the policy sequence $[\pi^\star_{t+1|t}(\cdot),\pi^\star_{t+2|t}(\cdot),\dots,\pi^\star_{t+N-1|t}(\cdot)]$ is an $(N-1)$ step feasible policy sequence at $t+1$ (excluding terminal condition), since at previous time $t$, it robustly satisfied all stage constraints in \eqref{eq:MPC_R_fin}. With this feasible policy sequence, $x_{t+N|t+1} \in \mathcal{X}_N$. From \eqref{eq:term_set_DF} we conclude that \eqref{eq:feas_seq_next_DF_sto} ensures $x_{t+N+1|t+1} \in \mathcal{X}_N$. This concludes the proof. 
\end{proof}

\section{Numerical Example}\label{sec:numerics}

In this section we demonstrate both the aspects of exploration and robust control of our robust Adaptive MPC, highlighted in Algorithm~\ref{alg1}. We wish to compute feasible solutions to the following infinite horizon control problem
\begin{equation}\label{eq:generalized_InfOCP_ex}
	\begin{array}{llll}
		\hspace{0cm}    
			\hspace{0cm} 
	\displaystyle\min_{u_0,u_1(\cdot),\ldots} & \displaystyle\sum\limits_{t\geq0} \bar{x}_t^\top Q \bar{x}_t + u_t^\top(\bar{x}_t) R u_t(\bar{x}_t)  \\[1ex]
	\hspace{5mm}	\text{s.t.} & x_{t+1} = Ax_t + Bu_t(x_t) + 0.05 \begin{bmatrix}\tan^{-1}(x_t(1)) \\ x_t(2)\end{bmatrix}\\[2.5ex]
		& \begin{bmatrix}-1 \\ -1 \\ -4
		\end{bmatrix} \leq \begin{bmatrix}x_t \\ u_t(x_t)
		\end{bmatrix} \leq \begin{bmatrix}1 \\ 3 \\ 1
		\end{bmatrix},~(\mathcal{X} \times \mathcal{U})\\[3.5ex]
		& \forall d(x_t) \in \mathcal{D}(x_t), \\
		&  x_0 = \bar{x}_0= x_S,\ t=0,1,\ldots,
	\end{array}
\end{equation}
with initial state $x_S = [-1,2]^\top$, where $ A = \begin{bmatrix} 1.2 & 1.5\\
		0 & 1.3\end{bmatrix}$ and $B = [0,1]^\top$. Algorithm~\ref{alg1} is implemented with a control horizon of $N=3$, and the feedback gain $K$ in \eqref{eq:term_set_DF} is chosen to be the optimal LQR gain for system $x^+ = (A+BK)x$ with $Q = 10 \mathbb{I}_2$ and $R= 2$. 

\subsection{Exploration for Uncertainty Learning}
We initialize $\mathcal{P}^d(\mathcal{X}) = \R^n$, resulting in an empty terminal set $\mathcal{X}_N$ in \eqref{eq:MPC_R_fin}. In this section, we present the ability of Algorithm~\ref{alg1} to explore the state-space with randomly generated inputs $u_j \sim \mathcal{N}(0,1)$, in order to eventually obtain a nonempty  $\mathcal{X}_N$ for starting the control process. 
\begin{figure}[h]
	\centering
	\includegraphics[width=\columnwidth]{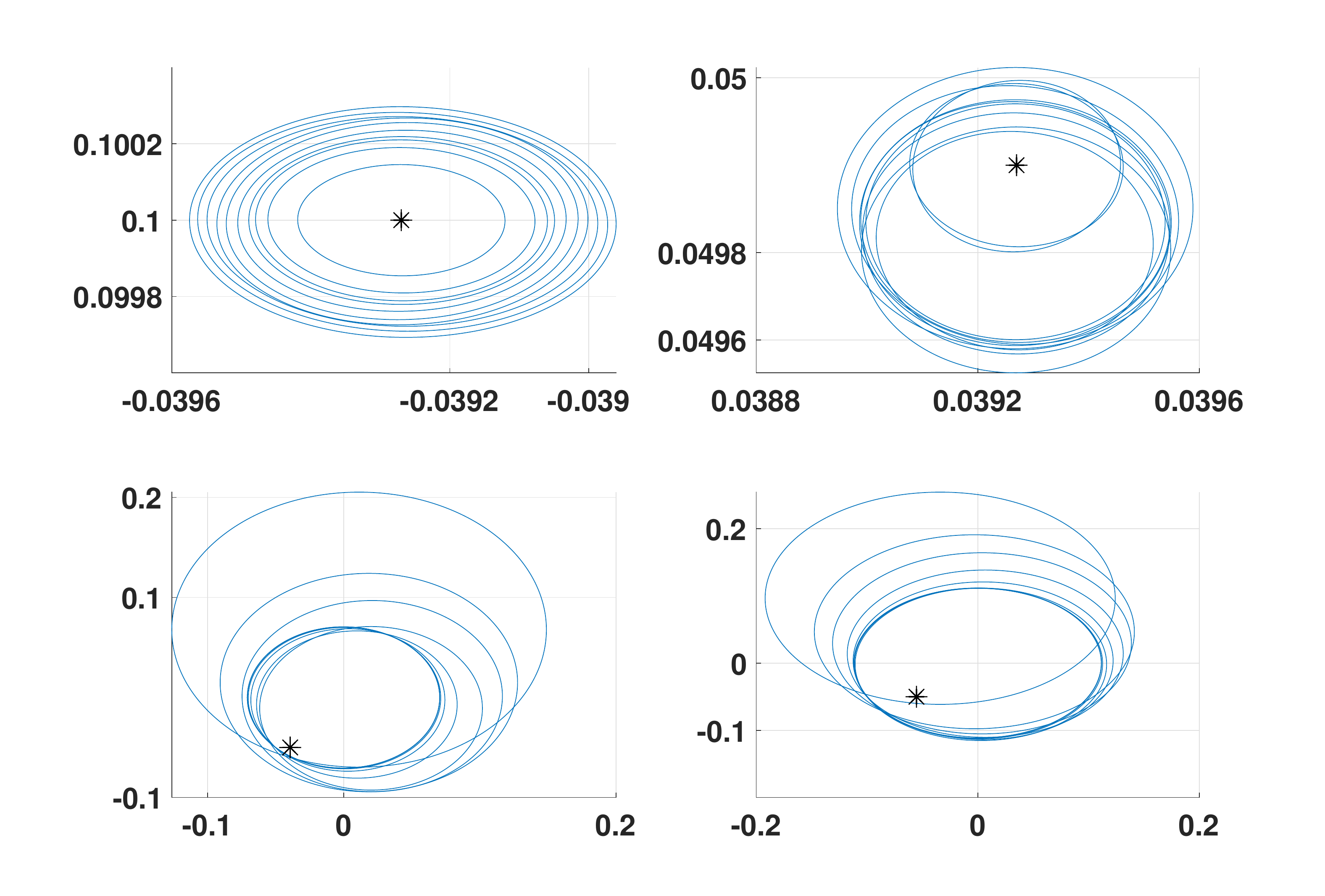}
	\caption{Uncertainty bound $\mathcal{D}(x)$ estimation at \emph{query} points with successive intersection of ellipses obtained from measured data. Star ($\star$) denotes the true value of $d(x)$, lying in the intersection.}
	\label{fig:states_explore}    
\end{figure}
Let the time indices during exploration phase be denoted by $j$. 

Fig.~\ref{fig:states_explore} shows the sets $E^d({x})$ at four fixed \emph{query} points ${x}_j = \{[-1,2]^\top,[1,1]^\top,[-1,1]^\top,[-2,-1]^\top \}$ as data is collected until instant $j$. This can be obtained from feasibility of \eqref{eq:bigE} (with $k = j$). As $j$ increases, $E^d(x)$ for each $x$ is contained in the successive intersections of ellipsoids, from \eqref{eq:func_dom}. The intersection shrinks for all points, as claimed in Proposition~\ref{prop:containment}. This is seen in Fig.~\ref{fig:states_explore}, which indicates improved information of $\mathcal{D}(x)$ with added data, for all $x \in \mathcal{X}$. At $j_\mathrm{max} = 30$, a nonempty $\mathcal{X}_N$ is obtained, shown in Fig.~\ref{fig:term_set}. This is when we start control and set $t=0$. 

\begin{figure}[h]
	\centering
	\includegraphics[width=\columnwidth]{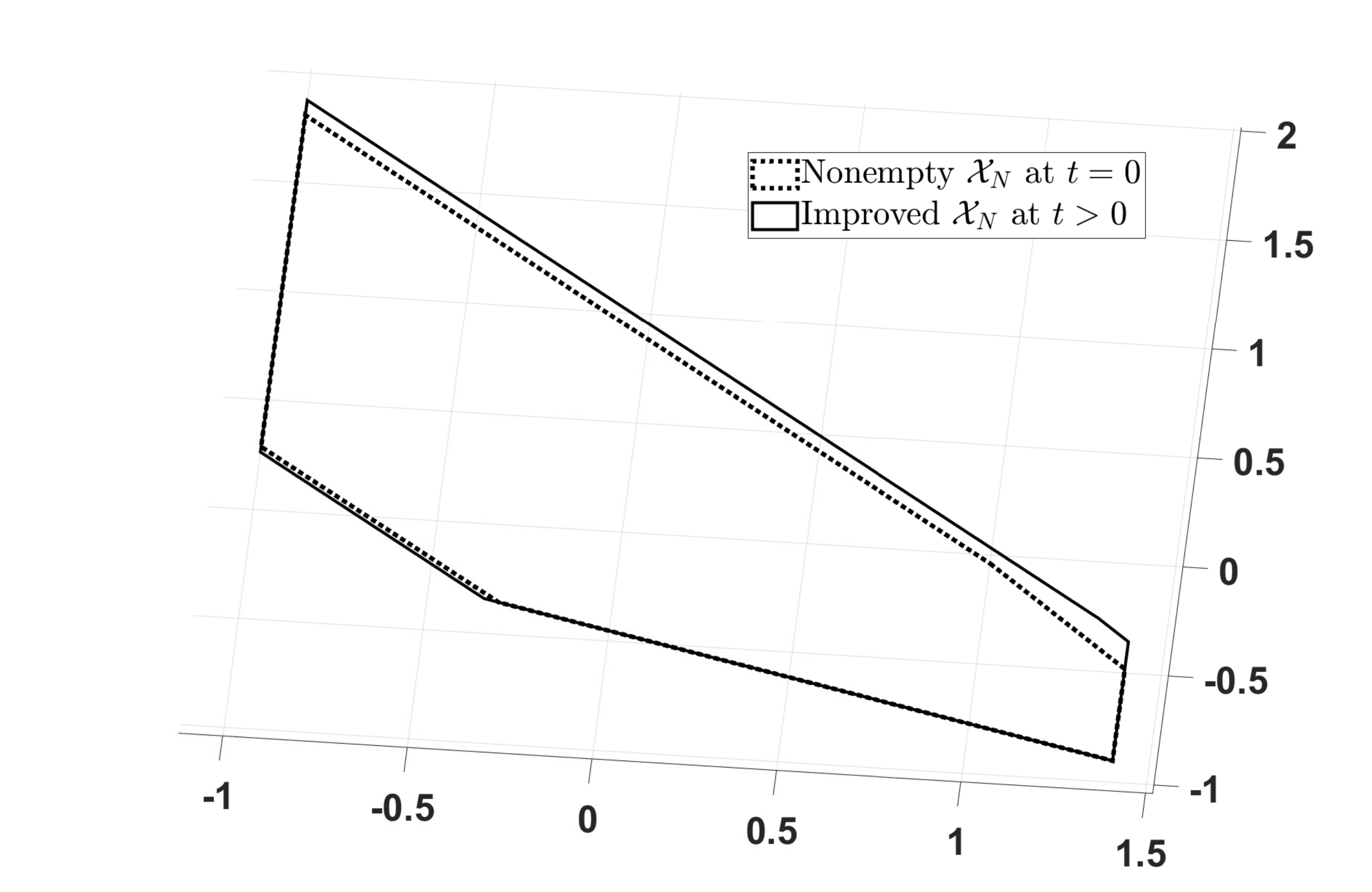}
	\caption{Terminal set construction. The set grows as estimation of $d(x)$ is improved from measurements.}
	\label{fig:term_set}    
\end{figure}

\subsection{Robust Constraint Satisfaction}
If the MPC problem \eqref{eq:MPC_R_fin} is feasible for parameters defined in \eqref{eq:generalized_InfOCP_ex}, it ensures robust satisfaction of constraints in \eqref{eq:generalized_InfOCP_ex} for all times $t>0$. This is highlighted with a realized trajectory in Fig.~\ref{fig:state_traj}. Furthermore, the terminal set is recomputed and improved at a $t>0$ with \eqref{eq:term_set_DF}, having refined $\mathcal{P}^d(\mathcal{X})$ estimation\footnote{rectangles with sides of length equal to major and minor axes of $E^d(\cdot)$} from \eqref{eq:bigE} (with $k=t+N$). The set grows, as seen in Fig.~\ref{fig:term_set}, resulting in lesser conservatism of \eqref{eq:MPC_R_fin}.

\begin{figure}[ht]
	\centering
	\includegraphics[width=\columnwidth]{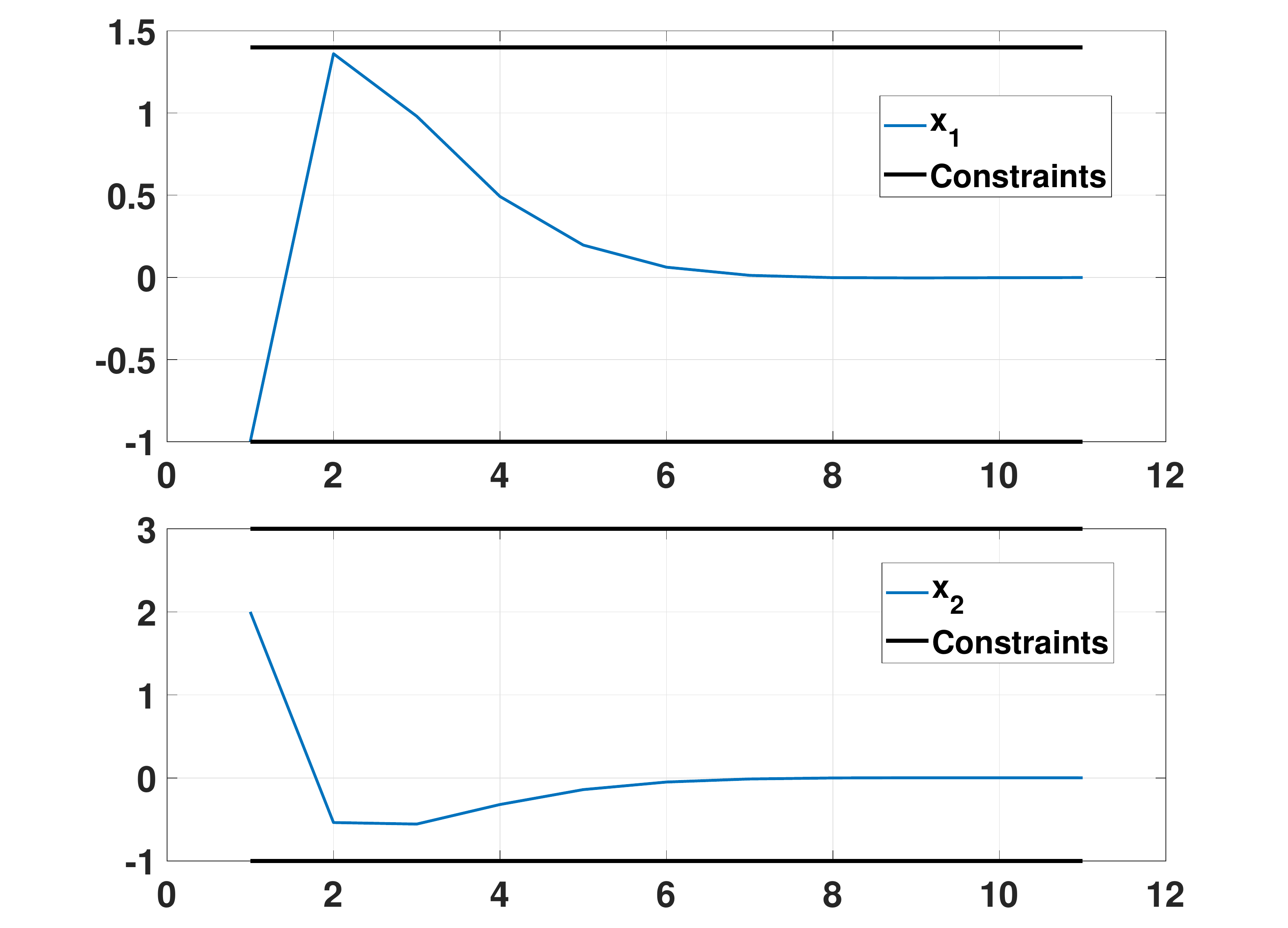}
	\caption{State trajectory with robust constraint satisfaction.}
	\label{fig:state_traj}    
\end{figure}


\section{Conclusions}
We proposed an Adaptive MPC framework to achieve robust satisfaction of state and input constraints for uncertain linear systems. The system uncertainty is assumed state dependent and globally Lipschitz. An envelope containing the uncertainty range is constructed with Quadratic Constraints (QCs), and is refined with data as the system explores the state-space. Upon collection of sufficient data, the system is able to solve a robust MPC problem for all times from a given initial state. The algorithm further reduces its conservatism by incorporating online model adaptation during control. 

\section*{Acknowledgements}
This work was partially funded by Office of Naval Research grant ONR-N00014-18-1-2833.
\renewcommand{\baselinestretch}{0.95}
\bibliographystyle{IEEEtran}
\bibliography{IEEEabrv,bibliography} 

\section*{Appendix}
\subsection*{Proof of Proposition~\ref{prop:containment}}
Let $x_i$ be the measurements collected at any time instant $i < t$. From \eqref{eq:func_dom} we see that for any given time $t$, the uncertainty domain $\mathcal{D}(\tilde{x}_t)$ is obtained from successive intersection operations of sampled range sets at $\tilde{x}_t$, for all times until $t$. Hence, $\mathcal{D}(\tilde{x}_{t_2}) = (\bigcap_{i=0}^{t_1} \mathcal{S}(x_i, \tilde{x}_{t_1})) \bigcap_{i=t_1}^{t_2} \mathcal{S}(x_i, \tilde{x}_{t_2}) = \mathcal{D}(\tilde{x}_{t_1}) \bigcap_{i=t_1}^{t_2} \mathcal{S}(x_i, \tilde{x}_{t_2})$, implying $\mathcal{D}(\tilde{x}_{t_2}) \subseteq \mathcal{D}(\tilde{x}_{t_1})$.

\subsection*{Proof of Proposition~\ref{prop:containment2}}
Consider any vector $[x^\top d^\top 1]^\top \in\mathbb{R}^{2n+1}$ such that $x\in E^d(\mathcal{X}_{k|t})$ and $[x^\top d^\top]^\top \in G(d)$. Given that \eqref{eq:bigE} is feasible for each prediction instant $k = \{t+1,\dots,t+N\}$ at any time $t$, we multiply $[x^\top d^\top 1]^\top$ on both sides of \eqref{eq:bigE}
\begin{align*}
    & -\rho^k_t \begin{bmatrix}x\\1\end{bmatrix}^\top \bar{P}^x_{k|t} \begin{bmatrix}x\\1\end{bmatrix}  
    + \begin{bmatrix} d \\ 1 \end{bmatrix}^\top \bar{P}^d_{k|t} \begin{bmatrix}d\\1\end{bmatrix}\\
    &~~~~~~~~~~~~~~~~~~~~~~~~~~~~~~-\begin{bmatrix}x \\ d \\ 1 \end{bmatrix}^\top \sum \limits_{i=0}^{t-1} \tau^k_i Q_L^d(x_i)\begin{bmatrix}x\\d\\1\end{bmatrix}\leq 0,
\end{align*}
for some $\{\rho^k_t, \tau^k_0,\dots,\tau^k_{t-1}\} \geq 0$, where $\bar{P}^x_{k|t}$ and  $\bar{P}^d_{k|t}$ are defined in Section~\ref{ssec:bounding_unc_set}. Now using Corollary~\ref{corr:env}, \eqref{eq:X_setDef} and \eqref{eq:prev_unc}, we can infer $\begin{bmatrix}d\\1\end{bmatrix}^\top \bar{P}^d_{k|t} \begin{bmatrix}d\\1\end{bmatrix}\leq 0$. 

\subsection*{SDP for Solving \eqref{eq:bigE}}
For all $k = \{t+1,\dots,t+N\}$, along MPC horizon, let us use the variable nomenclature $p(\mathcal{X}_{k|t}) = - \rho^k_t q^x_{k|t} + \sum \limits_{i=0}^{t-1} \tau^k_i L_d^2 \mathbb{I}_n,~q(\mathcal{X}_{k|t}) = \rho^k_t (q^x_{k|t})^\top p^x_{k|t} - \sum \limits_{i=0}^{t-1} \tau^k_i L_d^2x_i,~r(\mathcal{X}_{k|t}) = -\sum \limits_{i=0}^{t-1} \tau^k_i \mathbb{I}_n,~s(\mathcal{X}_{k|t}) =  \sum \limits_{i=0}^{t-1} \tau^k_i d(x_i)$, and $t(\mathcal{X}_{k|t}) = \rho^k_t \Big ( 1-(p^x_{k|t})^\top q^x_{k|t} p^x_{k|t} \Big ) - \sum \limits_{i=0}^{t-1} \tau^k_i \Big ( -L_d^2x_i^\top x_i + d^\top(x_i) d(x_i) \Big ) - 1$. Finding the minimum trace ellipsoid satisfying \eqref{eq:bigE} is posed as an SDP \cite[Section~11.4]{calafiore2014optimization} as:
\begin{equation*} 
    \begin{array}{llll}
        \displaystyle \min_{\xi} & \textnormal{trace}((q^d_{k|t})^{-1})\\
        \ \  \text{s.t.} 
        & \begin{bmatrix}  p(\mathcal{X}_{k|t}) & \mathbf{0} & q(\mathcal{X}_{k|t}) & \mathbf{0} \\ \mathbf{0} & r(\mathcal{X}_{k|t}) & s(\mathcal{X}_{k|t})& -\mathbb{I}_n \\ q^\top(\mathcal{X}_{k|t}) & s^\top(\mathcal{X}_{k|t}) & t(\mathcal{X}_{k|t}) & (p^d_{k|t})^\top \\ \mathbf{0} & -\mathbb{I}_n & p^d_{k|t} & -(q^d_{k|t})^{-1} \end{bmatrix} \preceq 0, \\ 
        & \rho^k_t \geq 0,\tau^k_i \geq 0,~q^d_{k|t} \succ 0, \\
        &\forall i = 0,1,\dots, t-1,
    \end{array}
\end{equation*}
with $ \xi = \{q^d_{k|t}, p^d_{k|t}, \rho^k_t, \tau^k_0,\dots,\tau^k_{t-1}\}$ and $\mathbf{0} \in \R^{n \times n}$. 

\end{document}